\newtheorem{theorem}{Theorem}
\newtheorem{lemma}{Lemma}
\newtheorem{remark}{Remark}
\newtheorem{definition}{Definition}
\title{Selection and identity rules for subductions of type A quantum Iwahori-Hecke algebras}
\author{Vincenzo Chilla}
\date{}
\begin{document}
\maketitle
\begin{abstract}
This paper is concerned with the subduction problem of type A quantum Iwahori-Hecke algebras $\mathbb{C} \mathbf{H}(\mathfrak{S}_f,q^2)$ with a real deformation parameter $q$, i.e. the problem of decomposing irreducible representations of such algebras as direct sum of irreducible representations of the subalgebras $\mathbb{C}\mathbf{H}(\mathfrak{S}_{f_1}, q^2) \times \mathbb{C}\mathbf{H}(\mathfrak{S}_{f_2}, q^2)$, with $f_1 + f_2 = f$. After giving a suitable combinatorial description for the subduction issue, we provide a selection rule, based on the Richardson-Littlewood criterion, which allows to determine the vanishing coupling coefficients between standard basis vectors for such representations, and we also present an equivariance condition for the subduction coefficients. Such results extend those ones corresponding to the subduction problem in symmetric group algebras $\mathbb{C}\mathfrak{S}_f \downarrow \mathbb{C}\mathfrak{S}_{f_1} \times \mathbb{C} \mathfrak{S}_{f_2}$ which are obtained by $q$ approaching the value $1$.
\end{abstract}
\newpage
\section{Introduction}
Introduced indipendently by Drinfeld and Jimbo~\cite{jimbo} in 1985, quantum groups soon appeared in close connection with a quantum mechanical problem in statistical mechanics: the quantum Yang-Baxter equation~\cite{baxter}. It was then realized that quantum groups had far-reaching applications in theoretical physics (e.g. conformal and quantum field theories), knot theory, and virtually many other areas of mathematics and mathematical physics. The notion of a quantum group is also closely related to the study of integrable dynamical systems~\cite{felder}, from which the concept of a Poisson-Lie group emerged, and to the classical Weyl-Moyal quantization~\cite{casalbuoni}.
\par
The type A Iwahori-Hecke algebras are special quantum group algebras realizations which arise naturally in the following setting: let $\mathfrak{U}$ be a quantum group corresponding to a finite dimensional complex simple Lie algebra of type A, and let V be the irreducible representation of $\mathfrak{U}$ corresponding to the fundamental weight $\omega$. Then the centralizer algebra $\mathfrak{Z}_m= End_{\mathfrak{U}}(V^{\otimes m})$ is isomorphic to a quotient of Iwahori-Hecke algebras~\cite{jimbo1}. Thus a quantum version of the classical Schur-Weyl duality~\cite{schur} between symmetric group algebras and unitary groups can be established. 
\par 
The quantum Schur-Weyl duality provides an efficient approach to the so-called \emph{Wigner-Racah calculus} for quantum unitary groups~\cite{pan}, i.e. the issue of determining coupling and recoupling coefficients (Clebsch-Gordan coeffcients, $3j$ symbols, $6j$ symbols, $\ldots$) of irreducible representations of such groups. On the other hand, the quantum Racah-Wigner algebra of unitary groups plays a fundamental role in the representation theory of quantum Lie algebras and its importance is represented by the connection, for example, with the study of many models in statistical mechanics~\cite{pasquier} and with the determination of new knot invariants~\cite{kassel}. In particular, $6j$ symbols are very useful and they were discussed by many authors using different methods~\cite{kachurik,bo,rajeswari}. Here, we would like to emphasize the works of Kramer~\cite{kramer} and Chen et al~\cite{chen}. They first used the Schur-Weyl duality relation between symmetric group algebras $\mathbb{C}\mathfrak{S}_f$ and unitary groups $U(n)$, which enable them to derive $U(n)$ Racah coefficients from \emph{subduction coefficients} of centralizer algebras. Such ideas was then extended to other classical Lie groups~\cite{pan1,pan2} and their quantum deformations~\cite{dai1,dai2}.
\par
Subduction coefficients for symmetric groups were first introduced in 1953 by Elliot {\it et al}~\cite{Elliot} to describe the states of a physical system with $f$ identical particles as composed of two subsystems with $f_1$ and $f_2$ particles respectively ($f_1 + f_2 = f$). Since Elliot {\it et al} (1953), many techniques have been proposed for calculating the subduction coefficients, but the investigation is until now surely incomplete. The main goal to give explicit and general closed algebraic formulas has not been achieved. Only some special cases have been solved~\cite{Kaplan2, Rao, McAven}. There are numerical methods~\cite{Horie, Kaplan1, Chen1} which are used to approach the issue, but no insight into the structure of the trasformation coefficients can be obtained. Another key outstanding problem is to resolve multiplicity separations~\cite{Butler} in a systematic manner, indicating a consistent choice of the indipendent phases and free factors. In~\cite{McAven1, McAven2} a breakthrough was made about this; however, the authors abandon the aim to obtain an algebraic solution and prefer a combinatorial recipe. 
\par
In~\cite{Chilla}, we came back to a general and algebraic approach to the subduction problem in symmetric groups and we analyzed in detail the \emph{linear equation method}, an efficient tool for the determination of the subduction coefficients as solution of a linear system. We showed that such a system, which  is constituted by a complicated primal structure of dependent linear equations, can be simplified by choosing a minimal set of sufficient equations related to the combinatorial concept of \emph{subduction graph}. Thus we could outline an improved and general algorithm to solve the subduction problem in symmetric groups by a graph searching process. Furthermore, we proposed a general form for the subduction coefficients resulting from the only requirement of orthonormality and we saw that the multiplicity separation can be described in terms of the Sylvester matrix of a suitable positive defined quadratic form describing the scalar product in the subduction space. Then, we was able to link the freedom in fixing the multiplicity separation to the freedom deriving from the choice of such a Sylvester matrix. 
\par
This paper represents a natural continuation and one of the possible generalizations of~\cite{Chilla}. Here, we focus on the subduction problem involving semisimple representations of type A quantum Iwahori-Hecke algebras and we analize the combinatorial structure of the reduction system arisen in the changing from standard to non-standard bases. So, we get a more insight into the structure of the solution for such a system, providing a \emph{selection rule} and an \emph{identity rule} for the subduction coefficients which allow further on reducing the number of unknowns equations. Consequently, we can drastically increment the dimension for the involved irreducible representations and find solutions for higher multiplicity cases.     
\section{ Type A Iwahori-Hecke algebras and their irreducible representations}
The Iwahori-Hecke algebra of type A~\cite{geck}, denoted here by $\mathbb{C} \mathbf{H}(\mathfrak{S}_f, q^2)$, is the algebra over $\mathbb{C}(q)$, the field of rational functions over $\mathbb{C}$, generated by $1$, $g_1$, $\ldots$, $g_{f-1}$ subject to the relations
\begin{subequations} \label{gen}
\begin{align}
g_i g_{i+1} g_i & = g_{i+1} g_i g_{i+1} \label{gen1}  \\
g_i g_j & = g_j g_i \ \ \ \ \ \ \ \ \ \ \ \ \ \ \  \ \ \  \ \ \textmd{if \phantom{a}  $|i-j| \geq 2$} \label{gen2}\\
g_i^2 & = (q - q^{-1}) g_i + 1 \label{gen3}
\end{align}
\end{subequations}
Therefore, $\mathbb{C} \mathbf{H}(\mathfrak{S}_f, q^2)$ can be considered as a quantum deformation of the symmetric group algebra $\mathbb{C}\mathfrak{S}_f$ which is defined from the previous relations by setting $q=1$ in equation (\ref{gen3}). In that case, each generator $g_i$ represents the elementary transposition which exchanges $i$ and $i+1$.
\par 
It is known~\cite{wenzl} that $\mathbb{C} \mathbf{H}(\mathfrak{S}_f, q^2)$ is not semisimple only for $q$ being a primitive $k^{th}$ root of unity, with $k=2, 3, \ldots, f$. In this paper, we choose $q$ to be a real parameter. Therefore $\mathbb{C} \mathbf{H}(\mathfrak{S}_f, q^2)$ is always supposed to be semisimple. \\ Whenever $\mathbb{C} \mathbf{H}(\mathfrak{S}_f, q^2)$ is semisimple, each irreducible representation $[\lambda]$, up to conjugacy, is labelled by a Young diagram $\lambda$ with $f$ boxes~\cite{leduc}. Given an irreducible representation $[\lambda]$, the standard basis vactors are defined by the multiplicity-free Gelfand-Tzetlin chain $\mathbb{C} \mathbf{H}(\mathfrak{S}_f, q^2) \subset \mathbb{C} \mathbf{H}(\mathfrak{S}_{f-1}, q^2) \subset \ldots \subset \mathbb{C} \mathbf{H}(\mathfrak{S}_1, q^2)$ and, as in the case of the symmetric group algebras, they are orthonormal vectors labelled by standard Young tableaux with diagram $\lambda$. Following a conventional notation, we denote by $|\lambda; m \rangle_q$ the standard basis vector associated to the standard Young tableau $m$.
\par 
The irreducible representation $[\lambda]$, in the standard basis (sometimes called \emph{Yamanouchi basis}), is given by~\cite{leduc}
\begin{equation}
g_i | \lambda; m\rangle_q = \frac{q^{d_i(m)}}{[d_i(m)]_q} |\lambda; m \rangle_q + \beta_{i,q}(m) |\lambda; g_i(m)\rangle_q
\label{actstd}
\end{equation}
where, for a fixed real number $x$, the \emph{quantum number} $[x]_q$ is defined by
\begin{equation}
[x]_q=\frac{q^x-q^{-x}}{q-q^{-1}},
\nonumber
\end{equation}
$\beta_{i,q}(m)$ is given by
\begin{equation}
\beta_{i,q}(m) = \left( 1- \frac{1}{[d_i(m)]_q^2} \right)^{1/2}
\nonumber
\end{equation}
and $d_i(m)$ represents the usual \emph{axial distance} from $i$ and $i+1$ in $m$. Finally, $g_i(m)$ represents the action of the generators $g_i$ on the standard Young tableaux. It is defined as follows: $g_i(m)$ is the tableau obtained by $m$ interchanging $i$ and $i+1$, if that one is another standard Young tableau; else we set $g_i(m)=m$. It is interesting to observe that, with the previous definition, $g_i(m)\neq m$ if and only if $|d_i(m)|\neq 1$. 
\section{Non-standard bases and subduction equations}
Beyond the standard basis, a non-standard basis, sometimes called \emph{split basis}, can be usefully choosen for the irreducible representation $[\lambda]$ of $\mathbb{C} \mathbf{H}(\mathfrak{S}_f, q^2)$. By definition, such a basis breaks $[\lambda]$ (which is, in general, a reducible representation of the direct product subalgebra  $\mathbb{C}\mathbf{H}(\mathfrak{S}_{f_1}, q^2) \times \mathbb{C}\mathbf{H}(\mathfrak{S}_{f_2}, q^2)$, with $f_1 + f_2 = f$) in a block-diagonal form
\begin{equation}
[\lambda]=\bigoplus_{\lambda_1, \lambda_2} \{\lambda; \lambda_1, \lambda_2 \} [\lambda_1] \otimes [\lambda_2],
\nonumber
\end{equation}
where $[\lambda_1]$ and $[\lambda_2]$ denote irreducible representations of $\mathbb{C}\mathbf{H}(\mathfrak{S}_{f_1}, q^2)$ and $\mathbb{C}\mathbf{H}(\mathfrak{S}_{f_2}, q^2)$, respectively. Here, $\{\lambda; \lambda_1, \lambda_2 \}$ gives the number of times (or multiplicity) that the irreducible representation $[\lambda_1] \otimes [\lambda_2]$ of $\mathbb{C}\mathbf{H}(\mathfrak{S}_{f_1}, q^2) \times \mathbb{C}\mathbf{H}(\mathfrak{S}_{f_2}, q^2)$ appears in the decomposition of $[\lambda]$. The entries of the matrix transforming between split and standard basis are called \emph{subduction coefficients} (SDCs). 
\par 
Let $[\lambda_1] \otimes [\lambda_2]$ be a \emph{fixed} irreducible representation of $\mathbb{C}\mathbf{H}(\mathfrak{S}_{f_1}, q^2) \times \mathbb{C}\mathbf{H}(\mathfrak{S}_{f_2}, q^2)$ in $[\lambda] \downarrow \mathbb{C}\mathbf{H}(\mathfrak{S}_{f_1}, q^2) \times \mathbb{C}\mathbf{H}(\mathfrak{S}_{f_2}, q^2)$ and $| \lambda_1, \lambda_2 ; m_1, m_2 \rangle_{q,\eta}$  a generic vector of the split basis (where $m_1$ and $m_2$ are standard Young tableaux with Young diagram $\lambda_1$ and $\lambda_2$ respectively,  and $\eta$ is the multiplicity label). We may expand such vectors in terms of the standard basis vectors $|\lambda ; m \rangle_q $ of $[\lambda]$:
\begin{equation}
| \lambda_1, \lambda_2 ; m_1, m_2 \rangle_{q,\eta} = \sum_m \ |\lambda ; m \rangle_q \langle \lambda ; m | \lambda_1, \lambda_2 ; m_1, m_2 \rangle_{q,\eta}.
\nonumber 
\end{equation}
Thus $\langle \lambda ; m | \lambda_1, \lambda_2 ; m_1, m_2 \rangle_{q,\eta}$ represent the SDCs of $[\lambda] \downarrow [\lambda_1] \otimes [\lambda_2]$ with given multiplicity label $\eta$ and satisfy the following unitary conditions:
\begin{subequations} \label{orton}
\begin{align}
\sum_m \ \langle \lambda ; m | \lambda_1, \lambda_2 ; m_1, m_2 \rangle_{q,\eta} \ \langle \lambda ; m | \lambda_1, \lambda'_2 ; m_1, m'_2 \rangle_{q,\eta'} = \delta_{\lambda_2 \lambda'_2} \delta_{m_2 m'_2} \delta_{\eta \eta'}
\label{orton1} \\
\sum_{\lambda_2 m_2 \eta} \ \langle \lambda ; m | \lambda_1, \lambda_2 ; m_1, m_2 \rangle_{q,\eta} \ \langle \lambda ; m' | \lambda_1, \lambda_2 ; m_1, m_2 \rangle_{q,\eta} = \delta_{m m'}.
\label{orton2} 
\end{align}
\end{subequations}
\par
The explicit action of the generators $g_i$ ($i \neq f_1$ because $g_{f_1}$ is not a generator of $\mathbb{C}\mathbf{H}(\mathfrak{S}_{f_1}, q^2) \times \mathbb{C}\mathbf{H}(\mathfrak{S}_{f_2}, q^2)$) on the elements of the split basis directly follows from (\ref{actstd}). In fact we have
\begin{equation}
g_i | \lambda_1, \lambda_2 ; m_1, m_2 \rangle_{q,\eta}=
\left\{
\begin{array}{cc}
(g_i |\lambda_1; m_1 \rangle_q) \otimes |\lambda_2; m_2 \rangle_q & \text{if $1 \le i < f_{1}$ } \\
|\lambda_1; m_1 \rangle_q \otimes (g_{i-f_1}|\lambda_2; m_2 \rangle_q) & \text{if $f_1 < i < f$}
\end{array}
\right.
\label{actsplit}
\end{equation} 
Then, from (\ref{actstd}) applied to the standard basis vectors of $[\lambda_1]$ and  $[\lambda_2]$ respectively, we have the action of the generators of $\mathbb{C}\mathbf{H}(\mathfrak{S}_{f_1}, q^2) \times \mathbb{C}\mathbf{H}(\mathfrak{S}_{f_2}, q^2)$ on the basis vectors $|\lambda_1; m_1 \rangle_q \otimes |\lambda_2; m_2 \rangle_q$.
\par 
Following Pan and Chen~\cite{pan3}, we can now construct a matrix in such a way that the SDCs are the components of its kernel basis vectors. From (\ref{actsplit}), for $1\leq i < f_1$, we get (for semplicity, we omit the multiplicity label $\eta$)
\begin{equation}
\langle \lambda ; m | g_i | \lambda_1, \lambda_2; m_1, m_2 \rangle_q = \langle \lambda ; m | (g_i | \lambda_1; m_1 \rangle_q) \otimes | \lambda_2; m_2 \rangle_q
\nonumber
\end{equation}
and, writing $| \lambda_1, \lambda_2; m_1, m_2 \rangle_q$ and $g_i | \lambda_1; m_1 \rangle_q$ in the bases of $\mathbb{C}\mathbf{H}(\mathfrak{S}_{f_1}, q^2)$  and $\mathbb{C}\mathbf{H}(\mathfrak{S}_{f_2}, q^2)$ respectively, the previous relation becomes
\begin{multline}
\sum_s \ \langle \lambda; m | g_i | \lambda; s \rangle_q \langle \lambda ; s | \lambda_1, \lambda_2 ; m_1, m_2 \rangle_q  = \\ \sum_t \ \langle \lambda_1; t | g_i | \lambda_1; m_1  \rangle_q \langle \lambda ; m | \lambda_1, \lambda_2 ; t , m_2 \rangle_q  
\label{lem1}
\end{multline}
In an analogous way, for $f_1<i<f$, we get
\begin{multline}
\sum_s \ \langle \lambda; m | g_i | \lambda; s \rangle_q \langle \lambda ; s | \lambda_1, \lambda_2 ; m_1, m_2 \rangle_q  =  \\ \sum_t \ \langle \lambda_2; t | g_{i-f_1} | \lambda_2; m_2  \rangle_q \langle \lambda ; m | \lambda_1, \lambda_2 ; m_1 , t \rangle_q.
\label{lem2}
\end{multline}
\par 
Then, once we know the explicit action of the generators of $\mathbb{C}\mathbf{H}(\mathfrak{S}_{f_1}, q^2) \times \mathbb{C}\mathbf{H}(\mathfrak{S}_{f_2}, q^2)$ on the standard basis, (\ref{lem1}) and (\ref{lem2}) (written for $l \in \{1, \ldots, n_1-1, n_1+1, \ldots, n-1 \}$ and all standard Young tableaux $m$, $m_1$, $m_2$ with Young diagrams $\lambda$, $\lambda_1$ and $\lambda_2$ respectively) define a linear equation system of the form:
\begin{equation}
\Omega (\lambda; \lambda_1, \lambda_2) \ \chi = 0  
\label{subdeq}
\end{equation}
where $\Omega (\lambda; \lambda_1, \lambda_2)$ is the \emph{subduction matrix} and $\chi$ is a vector with components given by the SDCs of $[\lambda] \downarrow [\lambda_1] \otimes [\lambda_2]$. 
\par 
Denoting by $f^{\lambda}$, $f^{\lambda_1}$ and $f^{\lambda_2}$ the dimensions of the irreps $[\lambda]$, $[\lambda_1]$ and $[\lambda_2]$ respectively (which are given by the known \emph{hook formula}), (\ref{subdeq}) is a linear equation system with $ f^{\lambda} f^{\lambda_1} f^{\lambda_2}$ unknowns (the SDCs) and $(f-2) f^{\lambda} f^{\lambda_1} f^{\lambda_2}$ equations. Thus $\Omega (\lambda; \lambda_1, \lambda_2)$ is a rectangular $(f-2) f^{\lambda} f^{\lambda_1} f^{\lambda_2} \times f^{\lambda} f^{\lambda_1} f^{\lambda_2}$ matrix. 
Using the explicit action of $g_i$ given by (\ref{actstd}), we see that all equations of (\ref{subdeq}) have the form
\begin{subequations}\label{sueq}
\begin{multline}
\alpha_{i,q}(m;m_1) \langle \lambda;  m | \lambda_1, \lambda_2 
; m_1, m_2 \rangle_q - \beta_{i,q}(m)  \langle \lambda ;  
g_i(m) | \lambda_1, \lambda_2 ; m_1, m_2 \rangle_q + \\ 
\beta_{i,q}(m_1) \langle \lambda ;  m| \lambda_1 , \lambda_2 ; g_i(m_1),m_2 \rangle_q = 0 \ \ \ \ \ \ \  \text{if $1 \leq i < f_1 $ }
\label{sueq1}
\end{multline}
\begin{multline}
\alpha_{i,q}(m,m_2) \langle \lambda;  m | \lambda_1, \lambda_2 ; m_1, m_2 \rangle_q - \beta_{i,q}(m)  \langle \lambda ;  g_i(m) | \lambda_1, \lambda_2 ; m_1, m_2 \rangle_q + \\
\beta_{i-f_1,q}(m_2) \langle \lambda ;  m| \lambda_1 , \lambda_2 ; m_1,g_{i-f_1}(m_2) \rangle_q = 0 \ \ \ \ \ \ \  \text{if $f_1<i<f$ }
\label{sueq2}
\end{multline}
\end{subequations}
where 
\begin{equation}
\alpha_{i,q}(s,t) =
\left\{ 
\begin{array}{cc}
\frac{q^{d_i(t)}}{[d_i(t)]_q} - \frac{q^{d_i(s)}}{[d_i(s)]_q} & \text{if $1 \leq i < f_1 $ }  \\   
\frac{q^{d_{i-f_1}(t)}}{[d_{i-f_1}(t)]_q} - \frac{q^{d_i(s)}}{[d_i(s)]_q} &
\text{if $f_1 < i < f $ }
\end{array}
\right. .
\nonumber
\end{equation}
In the next section, it will be useful the following 
\begin{remark}
Let us suppose that $1 \leq i < f_1$ and let $\bar{m}$ be the subtableau obtained from $m$ by considering the first $f_1$ boxes, and let $\bar{m}_1$ be one standard tableau, with Young diagram $\lambda$, obtained from $m_1$ by adding $f-f_1$ boxes. By observing that
\begin{equation}
\alpha_{i,q}(\bar{m}_1,\bar{m})=- \alpha_{i,q}(m,m_1), \ \  \beta_{i,q}(\bar{m})=\beta_{i,q}(m) , \ \  \beta_{i,q}(\bar{m}_1)=\beta_{i,q}(m_1),
\nonumber
\end{equation}
equations (\ref{sueq1}), written for such tableaux, become
\begin{multline}
\alpha_{i,q}(m,m_1) \langle \lambda;  \bar{m}_1 | \lambda_1, \lambda_2 ; \bar{m}, m_2 \rangle_q - \beta_{i,q}(m)  \langle \lambda ;  g_i(\bar{m}_1) | \lambda_1, \lambda_2 ; \bar{m}, m_2 \rangle_q + \\
\beta_{i,q}(m_1) \langle \lambda ;  \bar{m}_1| \lambda_1 , \lambda_2 ; g_i(\bar{m}),m_2 \rangle_q = 0  \ \ \ \ \ \ \ \ \ \ \ \ \ \ \ \ \ \ \ \ \ \ 
\label{sueq2a}
\nonumber
\end{multline}
which imply that \emph{$\langle \lambda;  m | \lambda_1, \lambda_2 ; m_1, m_2 \rangle_q$ is a symmetric function of $d_i(m)$ and $d_i(m_1)$} (note that a standard Young tableau is biunivocally determined by its axial distances and that one has $d_i(\bar{m})=d_i(m)$ and $d_i(\bar{m}_1)=d_i(m_1)$). In general, it is not true for $f_1<i<f$.
\end{remark}
\section{Selection and identity rules}
\subsection{Crossing and bridge pairs of standard tableaux}
Let $\lambda$ be a Young diagram relative to a partition of $f$ and $(m,m')$ a pair of standard Young tableaux with the same diagram $\lambda$. Furthermore, we denote by $d_k(m)$ the usual \emph{axial distance} between the numbers $k$ and $k+1$ in the tableau $m$. \\
If $m \neq m'$, we name \emph{cut} the minimum $i \in \{1, \ldots, f-1\}$ such that $d_i(m) \neq d_i(m')$. It is useful to give the following definitions:
\begin{definition}
We say that $(m,m')$ is a \emph{crossing pair} of standard Young tableaux if there exists $i \in \{1, \ldots, f-1 \}$ such that one of the following cases is verified: 
\begin{enumerate}
\item $d_i(m) \neq d_i(m')$, $g_i(m)\neq m$ and $ g_i(m') \neq m'$;
\item $d_i(m) \neq d_i(m')$, $g_i(m) = m$ and $g_i(m') = m'$.
\end{enumerate}
We call \emph{separation} for $(m,m')$ the minimum $i$ where one of the previous cases occurs.  
\label{cros}
\end{definition}
\begin{definition}
We say that $(m,m')$ is a \emph{bridge pair} of standard Young tableaux if it is \emph{not} a crossing pair, i.e.  
for all $i \in \{1, \ldots, f-1 \}$ one of the following cases is verified: 
\begin{enumerate}
\item $d_i(m)=d_i(m')$;
\item $g_i(m)=m$ and $ g_i(m') \neq m'$;
\item $ g_i(m) \neq m$ and $g_i(m')=m'$.
\end{enumerate}
\label{brid} 
\end{definition}
Then, the following lemma holds
\begin{lemma}
Let $(m, m')$ be a bridge pair with $m \neq m'$ and let $\bar{i}$ be the relative cut. Let us consider the application defined by
\begin{equation}
g_{\bar{i}}(m,m')= (g_{\bar{i}}(m),g_{\bar{i}}(m')).
\label{bridge}
\end{equation}
Then, by iteratively applying (\ref{bridge}), we always obtain a crossing pair.
\label{ponti}
\end{lemma}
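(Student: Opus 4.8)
The plan is to use the \emph{cut} $\bar i$ itself as a strictly decreasing termination measure for the iteration of~(\ref{bridge}). First I would fix the bookkeeping: writing $c_m(j)$ for the content (column minus row) of the box occupied by $j$ in $m$, one has $c_m(1)=0$ and $d_k(m)=c_m(k+1)-c_m(k)$, so the axial distances determine the tableau and conversely. Two consecutive entries can never share a diagonal (there is no value strictly between them to fill the intermediate corner), so $d_k(m)\neq 0$ always. Recalling from Section~2 that $g_k(m)=m$ iff $|d_k(m)|=1$, the only effect of a single swap I really need is
\begin{equation}
d_{\bar i-1}\bigl(g_{\bar i}(m)\bigr)=d_{\bar i-1}(m)+d_{\bar i}(m),
\nonumber
\end{equation}
which follows from $c_{g_{\bar i}(m)}(\bar i)=c_m(\bar i+1)$ and telescoping of contents, and is independent of the sign convention for the axial distance.

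The second step is to pin down the base of the induction: a bridge pair with $m\neq m'$ must have cut $\bar i\geq 2$. Indeed, the value $2$ is always adjacent to $1$ at the corner, so $d_1(m),d_1(m')\in\{+1,-1\}$ and hence $|d_1(m)|=|d_1(m')|=1$ for \emph{every} standard tableau. If the cut were $\bar i=1$ we would have $d_1(m)\neq d_1(m')$ with $g_1(m)=m$ and $g_1(m')=m'$, which is exactly case~2 of Definition~\ref{cros}; the pair would then be crossing, contradicting the bridge hypothesis.

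The heart of the argument is the inductive step: each application of~(\ref{bridge}) to a bridge pair lowers the cut by exactly one. At $k=\bar i$ the inequality $d_{\bar i}(m)\neq d_{\bar i}(m')$ excludes case~1 of Definition~\ref{brid}, so exactly one tableau is free and the other is stuck there; after possibly interchanging $m$ and $m'$, say $g_{\bar i}(m)\neq m$ and $g_{\bar i}(m')=m'$. Then~(\ref{bridge}) replaces $m$ by $g_{\bar i}(m)$ and leaves $m'$ untouched. Since $g_{\bar i}$ alters only $d_{\bar i-1},d_{\bar i},d_{\bar i+1}$, and since $d_k(m)=d_k(m')$ for all $k<\bar i$ by definition of the cut, the new pair still agrees for every $k\leq \bar i-2$; at $k=\bar i-1$, however, the displayed formula gives $d_{\bar i-1}(g_{\bar i}(m))=d_{\bar i-1}(m')+d_{\bar i}(m)$, which differs from $d_{\bar i-1}(m')$ because $d_{\bar i}(m)\neq 0$. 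Hence $g_{\bar i}(m)\neq m'$, the pair stays distinct, and its cut equals $\bar i-1$.

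Putting the pieces together finishes the proof by well-ordering. Starting from a bridge pair of cut $\bar i\geq 2$, each iteration that keeps us among bridge pairs yields a genuine (distinct) pair of standard tableaux whose cut has dropped by one; since the cut is a positive integer and a bridge pair can never have cut $1$, after at most $\bar i-1$ steps we are forced out of the bridge regime, i.e.\ we land on a crossing pair. The only delicate points are the content bookkeeping that makes the difference ``propagate backward'' from position $\bar i$ to $\bar i-1$, and the boundary analysis at $\bar i=1$; once these are secured, the statement is a one-line descent argument on the cut.
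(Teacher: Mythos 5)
Your proof is correct and follows essentially the same strategy as the paper's: a descent argument showing that each application of $g_{\bar i}$ to a bridge pair decreases the cut by exactly one, combined with the observation that a pair with cut $1$ is automatically crossing (since $|d_1|=1$ forces $g_1$ to fix every standard tableau). Your version is if anything a bit more explicit about the content bookkeeping and the fact that exactly one tableau moves at the cut, but the underlying argument is the same.
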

\begin{proof}
We first observe that, after one application of $g_{\bar{i}}$ on $(m,m')$, we have the following situation
\begin{equation}
\left\{ 
\begin{array}{cc}
d_j(g_{\bar{i}}(m)) = d_j(g_{\bar{i}}(m')) & \text{if $j \notin \{\bar{i}-1,\bar{i},\bar{i}+1\}$}  \\
d_j(g_{\bar{i}}(m)) = d_{j}(g_{\bar{i}}(m')) + d_{j+1}(g_{\bar{i}}(m')) & \text{if $j=\bar{i}-1$ }  \\
d_j(g_{\bar{i}}(m)) = - d_j(g_{\bar{i}}(m')) & \text{if $j=\bar{i}$ } \\
d_j(g_{\bar{i}}(m)) = d_{j-1}(g_{\bar{i}}(m')) + d_{j}(g_{\bar{i}}(m')) & \text{if $j=\bar{i}+1$ }   
\end{array}
\right.
\nonumber 
\end{equation}
thus $g_{\bar{i}}(m,m')$ has cut in $\bar{i}-1$ because obviously $d_{\bar{i}}(g_{\bar{i}}(m')) \neq 0$. \\
Then, at each step of the iteration of (\ref{bridge}), two cases may occur:
\begin{enumerate}
\item $g_{\bar{i}}(m,m')$ is a crossing pair and we have the assertion. 
\item $g_{\bar{i}}(m,m')$ is a bridge pair with cut in $\bar{i}-1$. 
\end{enumerate}
If case $(i)$ never occurs, after $\bar{i}-1$ iterations we should reach a bridge pair $(\tilde{m},\tilde{m}')$ with cut $i=1$. But $(\tilde{m},\tilde{m}')$ always is a crossing pair because $g_1(\tilde{m})=\tilde{m}$ and $g_1(\tilde{m}')=\tilde{m}'$ for each standard Young tableaux $\tilde{m}$ and $\tilde{m}'$.  
\end{proof}
\subsection{A selection rule}
Let $m$, $m_1$ and $m_2$ be three standard Young tableaux with $f$, $f_1$ and $f_2$ boxes such that $f_1 + f_2 = f$ and with Young diagrams $\lambda$, $\lambda_1$ and $\lambda_2$, respectively. 
We denote by $m^{(f_1)}$ the standard Young tableau obtained from $m$ by removing the boxes with numbers $f_1+1, \ldots, f$.
\begin{lemma}
If $(m^{(f_1)}, m_1)$ is a crossing pair of standard Young tableaux, then the subduction coefficient $\langle \lambda; m | \lambda_1, \lambda_2; m_1, m_2 \rangle_{q,\eta}$ for the reduction $[\lambda] \downarrow [\lambda_1]\otimes [\lambda_2] $ of $\mathbb{C} \mathbf{H}(\mathfrak{S}_f, q^2) \downarrow \mathbb{C}\mathbf{H}(\mathfrak{S}_{f_1}, q^2) \times \mathbb{C}\mathbf{H}(\mathfrak{S}_{f_2}, q^2) $ vanishes for all multiplicity labels $\eta$.
\label{incroci}
\end{lemma}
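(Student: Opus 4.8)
The plan is to argue by strong induction on the \emph{cut} $\bar{i}$ of the pair $(m^{(f_1)},m_1)$. First I would record two preliminary facts that hold because $q$ is real and $\mathbb{C}\mathbf{H}(\mathfrak{S}_f,q^2)$ is semisimple (so $q\neq 0,\pm 1$): $\beta_{i,q}(m)=0$ if and only if $|d_i(m)|=1$, i.e. if and only if $g_i(m)=m$; and $\alpha_{i,q}(m,m_1)=0$ if and only if $d_i(m)=d_i(m_1)$. The second follows by writing $q^{a}/[a]_q=(q-q^{-1})/(1-q^{-2a})$ and noting that $a\mapsto q^{-2a}$ is strictly monotone for real $q\neq 0,\pm 1$, hence injective on the integer axial distances. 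I would also observe that since $m^{(f_1)}$ and $m_1$ each have $f_1$ boxes, their cut and separation lie in $\{1,\dots,f_1-1\}$, so only the equations (\ref{sueq1}) (those with $1\le i<f_1$) are relevant; moreover for such $i$ the generator $g_i$ permutes only the first $f_1$ boxes, so $d_i(m)=d_i(m^{(f_1)})$ and $(g_i m)^{(f_1)}=g_i(m^{(f_1)})$. This lets me run the whole induction on the restricted pair while keeping $m$ a tableau with $f$ boxes.

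For the base of the induction and the first case, suppose the cut $\bar{i}$ is itself a crossing index of case (2) in Definition \ref{cros}, i.e. $g_{\bar{i}}(m^{(f_1)})=m^{(f_1)}$ and $g_{\bar{i}}(m_1)=m_1$ while $d_{\bar{i}}(m^{(f_1)})\neq d_{\bar{i}}(m_1)$. Evaluating (\ref{sueq1}) at $i=\bar{i}$, the two $\beta$-terms drop out by the first preliminary fact, leaving $\alpha_{\bar{i},q}(m,m_1)\,\langle\lambda;m|\lambda_1,\lambda_2;m_1,m_2\rangle_{q,\eta}=0$; by the second fact $\alpha_{\bar{i},q}(m,m_1)\neq 0$, so the coefficient vanishes. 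This in particular settles the base case $\bar{i}=1$, where $d_1=\pm 1$ forces both tableaux to be $g_1$-fixed, so the cut is automatically a case-(2) crossing.

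For the inductive step I would treat the remaining configurations at the cut. If $\bar{i}$ is a crossing index of case (1) (both tableaux moved, $d_{\bar{i}}$ distinct), I keep the full three-term relation (\ref{sueq1}) at $i=\bar{i}$ and express the coefficient through the two neighbours obtained by applying $g_{\bar{i}}$ to one slot at a time; a short computation with the axial-distance transformation from the proof of Lemma \ref{ponti} shows that each neighbour's restricted pair has cut $\bar{i}-1$. If instead $\bar{i}$ is a \emph{mixed} index (one slot fixed, one moved, hence not a crossing index, with the genuine separation lying further on), exactly one $\beta$-term survives and (\ref{sueq1}) becomes a two-term relation expressing the coefficient as a \emph{nonzero} multiple of the coefficient attached to $g_{\bar{i}}$ applied to the moved slot, whose restricted pair again has cut $\bar{i}-1$. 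In every case the reduction produces pairs of strictly smaller cut; iterating the cut-lowering map of Lemma \ref{ponti} on any such pair terminates at a crossing pair of cut $<\bar{i}$, to which the induction hypothesis applies, and propagating the vanishing back through the nonzero factors yields $\langle\lambda;m|\lambda_1,\lambda_2;m_1,m_2\rangle_{q,\eta}=0$.

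The step I expect to be the main obstacle is precisely this propagation through \emph{bridge} configurations: a neighbour produced by the three-term relation in the case-(1) situation, or the transformed pair produced in the mixed situation, need not itself be a crossing pair, so the induction hypothesis cannot be invoked directly. Lemma \ref{ponti} is what rescues the argument, guaranteeing that repeated application of the cut-lowering operation always reaches a genuine crossing pair while strictly decreasing the cut. The delicate bookkeeping is to check that each intermediate step is governed by a two-term relation with nonzero coefficient, so that no information is lost when the vanishing is pulled back; this rests on the first preliminary fact (a $\beta$ vanishes exactly at the bridge-type and case-(2) indices) together with $\alpha_{\bar{i},q}(m,m_1)\neq 0$ at the cut.
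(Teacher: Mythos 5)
Your argument is correct, but it is not the route the paper takes, and the difference is worth recording. For the heart of the lemma --- the case where both tableaux are moved at the relevant index --- the paper works \emph{locally at the separation}: it assembles a $4\times 4$ homogeneous system $\mathcal{S}=\mathcal{I}-\mathcal{M}$ from the quadratic relation (\ref{gen3}) together with $g_i=g_i^{\dagger}$, computes the two-dimensional kernel of $\mathcal{S}$, and then invokes the symmetry property of Remark~1 (the SDC as a symmetric function of $d_i(m)$ and $d_i(m_1)$) to single out the admissible kernel vectors and force the coefficient to zero, splitting into the subcases $d_i(m)\neq -d_i(m_1)$ and $d_i(m)=-d_i(m_1)$. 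You instead run a strong induction on the cut using only the linear three-term equations (\ref{sueq1}): at the cut either both $\beta$'s vanish (your base/case-(2) step, which coincides in substance with the paper's eigenvalue-mismatch computation $(q^{\pm 2}+1)\langle\cdot\rangle=0$), or the equation expresses the coefficient through one or two neighbour pairs whose cut is exactly $\bar i-1$, and the nonvanishing of $\alpha$ at the cut lets you pull the vanishing back. Your reduction also quietly absorbs the bridge-pair Lemma~\ref{ponti1}, so in effect you prove Theorem~\ref{selec} in a single induction; framing the induction over \emph{all} pairs with $m^{(f_1)}\neq m_1$ (rather than only crossing pairs, with the detour through Lemma~\ref{ponti} for bridge neighbours) would make this cleaner and would let you drop the crossing/bridge dichotomy altogether. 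What your approach buys is independence from Remark~1 and from the explicit kernel computation --- both of which are the more delicate ingredients of the paper's proof --- at the price of the bookkeeping that each neighbour pair is genuinely non-identical and has cut exactly $\bar i-1$ (which you correctly reduce to $d_{\bar i}\neq 0$ via the axial-distance transformation rule). The two small checks you should make explicit are that a case-(1) or mixed configuration cannot occur at $\bar i=1$ (so the base case is always case (2)), and that the injectivity of $d\mapsto q^{d}/[d]_q$ also covers $q=1$, where your formula $(q-q^{-1})/(1-q^{-2d})$ degenerates but the limit $1/d$ is still injective.
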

\begin{proof}
Let $i \in \{1, \ldots, f_1 -1\}$ be the separation of $(m^{(f_1)},m_1)$. From definition~\ref{cros}, we need to destinguish the following situations:
\begin{itemize}
\item $g_i(m) \neq m $ and $g_i(m_1) \neq m_1 $. \\
The action of the generator $g_i$ on the standard base vector $|\lambda; m \rangle_{q,\eta}$ is given by equation~(\ref{actstd}), for a fixed multiplicity label $\eta$. 
The action on the split base vector $ |\lambda_1, \lambda_2;m_1,m_2 \rangle_{q,\eta}$ is defined in an analogous way. Therefore, by using the equation~(\ref{gen3}) and the fact that $g_i = {g_i}^{\dagger}$, we get the following relation for the SDCs: 
\begin{multline}
\left(1-\frac{q^{d_i(m_1)-d_i(m)}}{[d_i(m)]_q [d_i(m_1)]_q}\right) \ \langle \lambda;m|\lambda_1, \lambda_2 ;m_1, m_2 \rangle_{q,\eta} \ + \\  - \ \frac{\beta_{i,q}(m_1) q^{-d_i(m)}}{[d_i(m)]_q} \  \langle \lambda; m |\lambda_1, \lambda_2 ; g_i(m_1), m_2 \rangle_{q,\eta} \ + \\
- \ \frac{\beta_{i,q}(m)q^{d_i(m_1)}}{[d_i(m_1)]_q} \  \langle \lambda; g_i(m)|\lambda_1,\lambda_2 ; m_1, m_2 \rangle_{q,\eta} \ + \\ - \ \beta_{i,q}(m) \beta_{i,q}(m_1) \ \langle\lambda; g_i(m)|\lambda_1, \lambda_2; g_i(m_1), m_2 \rangle_{q,\eta}  = 0.
\nonumber
\end{multline}
The previous equation, also written for the other subduction coefficients, i.e. $\langle\lambda; m |\lambda_1, \lambda_2;g_i(m_1), m_2  \rangle_{q,\eta}$, $\langle \lambda;g_i(m)|\lambda_1, \lambda_2; m_1, m_2  \rangle_{q,\eta}$ and $\langle\lambda; g_i(m)|\lambda_1, \lambda_2 ; g_i(m_1), m_2  \rangle_{q,\eta}$, by noting that $d_i(g_i(m))=-d_i(m)$ and $d_i(g_i(m_{12}))=-d_i(m_{12})$, provides the homogeneous linear system described by the matrix $\mathcal{S} = \mathcal{I} - \mathcal{M}$, where $\mathcal{I}$ is the $4 \times 4$ identity matrix and 
\begin{equation}
\mathcal{M}=
\left(
\begin{array}{cc}
\frac{q^{-d_i(m)}}{[d_i(m)]_q} & \beta_{i,q}(m)  \\
\beta_{i,q}(m) & -\frac{q^{d_i(m)}}{[d_i(m)]_q}
\end{array}
\right)
\otimes
\left(
\begin{array}{cc}
\frac{q^{d_i(m_1)}}{[d_i(m)]_q} & \beta_{i,q}(m_1)  \\
\beta_{i,q}(m_1) & -\frac{q^{-d_i(m_1)}}{[d_i(m_1)]_q}
\end{array}
\right).
\nonumber
\end{equation}
So, the SDCs under consideration belong to the kernel space of the matrix $\mathcal{S}$ (i.e. the eigenspace relative to the eigenvalue $0$) which has dimension $2$ (note that the eigenvalues of $\mathcal{S}$ are $0$, with multiplicity $2$, $1+q^2$ and $1+q^{-2}$). By remembering the remark in the previous section, we need to choose the vectors in the kernel of $S$ which are formed by symmetric functions of $d_i(m)$ and $d_i(m_1)$.
\begin{enumerate}
\item If $d_i(m) \neq - d_i(m_{1})$; \\
 it is easy to see that they are of the form
\begin{equation}
s(d_i(m), d_i(m_1)) 
\left(
\begin{array}{c}
1 \\
\frac{[d_i(m)]_q[d_i(m_1)]_q(\beta_{i,q}(m) +\beta_{i,q}(m_1))}{[d_i(m)+d_i(m_1)]_q}    \\
\frac{[d_i(m)]_q[d_i(m_1)]_q(\beta_{i,q}(m) +\beta_{i,q}(m_1))}{[d_i(m)+d_i(m_1)]_q}    \\
-1
\end{array}
\right),
\nonumber
\end{equation}
where $s(d_i(m), d_i(m_1))$ is a symmetric function. Thus
\begin{equation}
\langle \lambda;m |\lambda_1,\lambda_2 ; m_1, m_2 \rangle_{q,\eta} = - \langle \lambda; g_i(m) |\lambda_1, \lambda_2; g_i(m_1), m_2 \rangle_{q,\eta}
\label{cond1}
\end{equation} 
 and 
\begin{equation}
\langle \lambda; g_i(m)|\lambda_1, \lambda_2; m_1, m_2 \rangle_{q,\eta} = \langle \lambda;m |\lambda_1,\lambda_2; g_i(m_1), m_2 \rangle_{q,\eta}.
\label{cond2}
\end{equation}
Because $d_i(g_i(m)) = -d_i(m)$, $d_i(m) \neq - d_i(m_1) \Rightarrow d_i(g_i(m)) \neq d_i(m_1)$ and $d_i(m) \neq d_i(m_1) \Rightarrow d_i(g_i(m)) \neq - d_i(m_1)$. Therefore relation (\ref{cond2}), written for $\langle \lambda; g_i(m) | \lambda_1 , \lambda_2 ;m_1, m_2 \rangle_{q,\eta}$, yelds (remember that $g_i^2 (m)= m$)
 \begin{equation}
\langle \lambda ; m |\lambda_1, \lambda_2 ; m_1, m_2 \rangle_{q,\eta} = \langle\lambda; g_i(m)| \lambda_1, \lambda_2 ; g_i(m_1), m_2 \rangle_{q,\eta}.
 \label{cond5} 
\end{equation}
From (\ref{cond1}) and (\ref{cond5}), we get 
\begin{equation}
\langle\lambda; m|\lambda_1, \lambda_2;m_1, m_2 \rangle_{q,\eta} = 0.
\nonumber
\end{equation}
\item 
If $d_i(m) = -d_i(m_1)$; \\
the set of our symmetric SDCs is given by the vectors of the form (again, $s(d_i(m),d_i(m_1))$ represents a symmetric function)
\begin{equation}
s(d_i(m),d_i(m_1))
\left(
\begin{array}{c}
0 \\
1 \\
1 \\
0
\end{array}
\right),
\nonumber
\end{equation}
which directly implies
\begin{equation}
\langle \lambda;g_i(m)| \lambda_1, \lambda_2 ; m_1, m_2 \rangle_{q,\eta} = \langle\lambda; m|\lambda_1, \lambda_2; g_i(m_1), m_2 \rangle_{q,\eta}
\label{cond3}
\end{equation}
and
\begin{equation}
\langle \lambda; m|\lambda_1, \lambda_2 ;m_1, m_2 \rangle_{q,\eta} = \langle \lambda; g_i(m)| \lambda_1, \lambda_2; g_i(m_1), m_2 \rangle_{q,\eta} = 0.
\label{cond2bis}
\end{equation} 
\end{enumerate}
\item $g_i(m) = m$ and $g_i(m_1) = m_1$. \\
The action of the generator $g_i$ on the standard base vector $|\lambda; m \rangle_{q,\eta}$ is given by
\begin{equation}
g_i |\lambda; m \rangle_{q,\eta} = \pm q^{\pm 1} |\lambda; m \rangle_{q,\eta}
\nonumber
\end{equation}
and the action on the split base vector $ |\lambda_1, \lambda_2;m_1,m_2 \rangle_{q,\eta}$ is
\begin{equation}
g_i |\lambda_1, \lambda_2 ; m_1, m_2 \rangle_{q,\eta} = \mp q^{\mp 1}|\lambda_1,\lambda_2; m_1, m_2 \rangle_{q,\eta}.
\nonumber 
\end{equation}
Thus, from relation~(\ref{gen3}) and by $g_i={g_i}^{\dagger}$,
\begin{equation}
(q^{\pm 2} + 1) \langle \lambda;m|\lambda_1, \lambda_2;m_1,m_2 \rangle_{q,\eta} = 0
\nonumber
\end{equation}
and then
\begin{equation}
\langle \lambda;m|\lambda_1, \lambda_2;m_1,m_2 \rangle_{q,\eta} = 0.
\nonumber
\end{equation} 
\end{itemize}
\end{proof}
\begin{remark}
It is important to point out that, in the previous proof, the fact that $q \in \mathbb{R}$ is crucial because we need the Hermitianity of the $g_i$ action: $g_i={g_i}^{\dagger}$. On the other hand, such a condition guarantees the inequality $q^2+1 \neq 0$.
\end{remark}
\begin{lemma}
If $(m^{(f_1)}, m_1)$ is a bridge pair of standard Young tableaux, then the subduction coefficient $\langle \lambda; m | \lambda_1, \lambda_2; m_1, m_2 \rangle_{q,\eta}$ for the reduction $[\lambda] \downarrow [\lambda_1]\otimes [\lambda_2] $ of $\mathbb{C} \mathbf{H}(\mathfrak{S}_f, q^2) \downarrow \mathbb{C}\mathbf{H}(\mathfrak{S}_{f_1}, q^2) \times \mathbb{C}\mathbf{H}(\mathfrak{S}_{f_2}, q^2) $ vanishes for all multiplicity labels $\eta$.
\label{ponti1}
\end{lemma}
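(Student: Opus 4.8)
The plan is to reduce the bridge case to the crossing case already settled in Lemma~\ref{incroci}, by exploiting the iteration of Lemma~\ref{ponti}. Concretely, I would build a finite chain of standard tableau pairs, starting from $(m^{(f_1)},m_1)$ and terminating at a crossing pair, such that consecutive subduction coefficients along the chain are linked by a \emph{two-term} relation with nonzero coefficients. Since the coefficient attached to the terminal (crossing) pair vanishes by Lemma~\ref{incroci}, propagating the equalities backwards forces $\langle \lambda; m | \lambda_1, \lambda_2; m_1, m_2 \rangle_{q,\eta}=0$.

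First I would examine the cut $\bar{i}$ of the bridge pair $(m^{(f_1)},m_1)$, which lies in $\{1,\ldots,f_1-1\}$ since both tableaux have $f_1$ boxes; in particular the index is admissible for the subduction equation~(\ref{sueq1}), and for $\bar{i}<f_1$ one has $d_{\bar{i}}(m)=d_{\bar{i}}(m^{(f_1)})$. At the cut $d_{\bar{i}}(m^{(f_1)})\neq d_{\bar{i}}(m_1)$, so case~1 of Definition~\ref{brid} is excluded, and being a bridge pair leaves exactly case~2 or case~3, i.e.\ precisely one of $g_{\bar{i}}(m^{(f_1)})=m^{(f_1)}$ or $g_{\bar{i}}(m_1)=m_1$ holds. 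In the first situation $\beta_{\bar{i},q}(m)=0$ and~(\ref{sueq1}) collapses to $\alpha_{\bar{i},q}(m;m_1)\langle \lambda; m|\lambda_1,\lambda_2;m_1,m_2\rangle_{q,\eta}+\beta_{\bar{i},q}(m_1)\langle \lambda; m|\lambda_1,\lambda_2;g_{\bar{i}}(m_1),m_2\rangle_{q,\eta}=0$; in the second $\beta_{\bar{i},q}(m_1)=0$ and it collapses to $\alpha_{\bar{i},q}(m;m_1)\langle \lambda; m|\lambda_1,\lambda_2;m_1,m_2\rangle_{q,\eta}-\beta_{\bar{i},q}(m)\langle \lambda; g_{\bar{i}}(m)|\lambda_1,\lambda_2;m_1,m_2\rangle_{q,\eta}=0$. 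In either case the surviving $\beta$ factor is nonzero and, because $d_{\bar{i}}(m)\neq d_{\bar{i}}(m_1)$ together with the injectivity of $x\mapsto q^x/[x]_q$ on the integers for real $q$ away from roots of unity gives $\alpha_{\bar{i},q}(m;m_1)\neq 0$, the original coefficient is a \emph{finite, nonzero} multiple of the coefficient attached to the transformed pair.

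The transformed pair is exactly $g_{\bar{i}}(m^{(f_1)},m_1)$ in the sense of~(\ref{bridge}): in case~2 the first tableau is fixed and the second becomes $g_{\bar{i}}(m_1)$, while in case~3 the second is fixed and the first becomes $g_{\bar{i}}(m^{(f_1)})=(g_{\bar{i}}(m))^{(f_1)}$, the equality holding since $\bar{i}<f_1$ leaves the boxes $f_1+1,\ldots,f$ and the tableau $m_2$ untouched. I would then iterate: if the transformed pair is crossing, Lemma~\ref{incroci} already gives a vanishing coefficient; otherwise it is again a bridge pair whose cut, by the computation in the proof of Lemma~\ref{ponti}, has dropped to $\bar{i}-1$, and the same two-term argument applies at the new cut. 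By Lemma~\ref{ponti} this process reaches a crossing pair after finitely many steps, so the chain of nonzero proportionalities, together with Lemma~\ref{incroci}, yields the claim.

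The delicate points I expect are, first, the non-vanishing of $\alpha_{\bar{i},q}(m;m_1)$, which is where the hypothesis that $q$ is real and the algebra semisimple (so $q$ is not a root of unity) re-enters, guaranteeing that distinct integer axial distances produce distinct values of $q^x/[x]_q$; and second, the bookkeeping that identifies, at every step, the coefficient produced by~(\ref{sueq1}) with the subduction coefficient of the pair obtained by applying~(\ref{bridge}), so that the iteration of Lemma~\ref{ponti} can be invoked verbatim. Everything else is a routine backward propagation of zeros along the finite chain.
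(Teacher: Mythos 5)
Your proposal follows essentially the same route as the paper: at the cut exactly one of the two tableaux is fixed by $g_{\bar{i}}$, the subduction relation degenerates to a two-term proportionality expressing the coefficient of the pair as a multiple of that of $g_{\bar{i}}(m^{(f_1)},m_1)$, and iterating via Lemma~\ref{ponti} terminates at a crossing pair whose coefficient vanishes by Lemma~\ref{incroci}. Your added care about the non-vanishing of $\alpha_{\bar{i},q}(m;m_1)$ (via injectivity of $x\mapsto q^x/[x]_q$ on nonzero integers for real $q$) makes explicit a step the paper leaves implicit, but the argument is the same.
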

\begin{proof}
Let $i \in \{1, \ldots, f_1 -1\}$ be the cut of $(m^{(f_1)},m_1)$. For semplicity, let us suppose $g_i(m) = m $ and $g_i(m_1)\neq m_1 $. \\
The action of the generator $g_i$ on the standard base vector $|\lambda; m \rangle$ is given by
\begin{equation}
g_i |\lambda; m \rangle_{q,\eta} = \pm q^{\pm 1} |\lambda; m \rangle 
\nonumber
\end{equation}
and the action on the split base vector $ |\lambda_1, \lambda_2;m_1,m_2 \rangle_{q,\eta}$ is
\begin{multline}
g_i |\lambda_1, \lambda_2 ; m_1, m_2 \rangle_{q,\eta} = \\ \frac{q^{d_i(m_1)}}{[d_i(m_1)]_q} \ |\lambda_1,\lambda_2; m_1, m_2 \rangle_{q,\eta} \ + \ \beta_{i,q}(m_1) \ |\lambda_1,\lambda_2; g_i(m_1), m_2 \rangle_{q,\eta}
\nonumber
\end{multline} 
Because $|d_i(m_1)|\neq  1$, the previous actions imply 
\begin{equation}
\langle \lambda;m|\lambda_1, \lambda_2;m_1,m_2 \rangle_{q,\eta} = b'_{i,q}(m_1) \langle \lambda;m|\lambda_1, \lambda_2;g_i(m_1),m_2 \rangle_{q,\eta}
\nonumber
\end{equation}
with $b'_{i,q}(m_1)$ a suitable factor.\\ 
In an analogous way, the case $g_i(m) \neq m$ and $g_i(m_1)=m_1$ provides
\begin{equation}
\langle \lambda;m|\lambda_1, \lambda_2;m_1,m_2 \rangle_{q,\eta} = b''_{i,q}(m) \langle \lambda;g_i(m)|\lambda_1, \lambda_2;m_1,m_2 \rangle_{q,\eta}
\nonumber
\end{equation}
with $b''_{i,q}(m)$ another suitable factor. \\
From lemma~\ref{ponti}, by iterating the previous derivation, we may write
\begin{equation}
\langle \lambda;m|\lambda_1, \lambda_2;m_1,m_2 \rangle_{q,\eta} = b_{q} \ \langle \lambda; \bar{m} |\lambda_1, \lambda_2;\bar{m}_1,m_2 \rangle_{q,\eta} 
\nonumber
\end{equation}
with $(\bar{m}, \bar{m}_1)$ a crossing pair and $b_{q}$ a total numerical factor. But, from lemma \ref{incroci},
\begin{equation}
\langle \lambda; \bar{m} |\lambda_1, \lambda_2;\bar{m}_1,m_2 \rangle_{q,\eta} = 0,
\nonumber
\end{equation}
thus 
\begin{equation}
\langle \lambda;m|\lambda_1, \lambda_2;m_1,m_2 \rangle_{q,\eta} = 0.
\nonumber
\end{equation}
\end{proof}
It is now possible to give the following 
\begin{theorem}[Selection Rule]
If $m^{(f_1)} \neq m_1$, then the subduction coefficient $\langle \lambda; m | \lambda_1, \lambda_2; m_1, m_2 \rangle_{q,\eta}$ for the reduction $[\lambda] \downarrow [\lambda_1]\otimes [\lambda_2] $ of $\mathbb{C} \mathbf{H}(\mathfrak{S}_f, q^2) \downarrow \mathbb{C}\mathbf{H}(\mathfrak{S}_{f_1}, q^2) \times \mathbb{C}\mathbf{H}(\mathfrak{S}_{f_2}, q^2) $ vanishes for all multiplicity labels $\eta$.
\label{selec}
\end{theorem}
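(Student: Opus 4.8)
The plan is to obtain the theorem as an immediate consequence of the two lemmas just proved, Lemma~\ref{incroci} and Lemma~\ref{ponti1}, once the exhaustiveness of the crossing/bridge dichotomy is made explicit. The starting observation is that $m^{(f_1)}$ and $m_1$ are both standard Young tableaux on $f_1$ boxes, so Definitions~\ref{cros} and~\ref{brid} are available for the pair $(m^{(f_1)}, m_1)$. By construction a bridge pair is \emph{defined} to be precisely a pair that is not a crossing pair, so the two classes are mutually exclusive and jointly exhaustive. Hence, under the hypothesis $m^{(f_1)} \neq m_1$, the pair $(m^{(f_1)}, m_1)$ falls into exactly one of these two classes, with no remaining case to consider.

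I would then split on the classification and invoke the corresponding lemma. If $(m^{(f_1)}, m_1)$ is a crossing pair, Lemma~\ref{incroci} yields $\langle \lambda; m | \lambda_1, \lambda_2; m_1, m_2 \rangle_{q,\eta} = 0$ for every multiplicity label $\eta$; if instead it is a bridge pair, the same vanishing follows from Lemma~\ref{ponti1} (whose proof reduces, via Lemma~\ref{ponti}, the bridge case to the crossing case). Combining the two, the coefficient vanishes in all situations, which is exactly the selection rule.

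Since the heavy lifting is done inside the two lemmas, the only genuinely delicate point --- and the step I expect to be the main obstacle --- is that the comparison $m^{(f_1)} \neq m_1$ does not by itself guarantee that the two tableaux share the same Young diagram, whereas Definitions~\ref{cros} and~\ref{brid} are framed for equal-shape pairs. When $\mathrm{shape}(m^{(f_1)}) = \lambda_1$ the argument above applies verbatim. When the shapes differ I would close the gap by a branching/orthogonality argument: under the subalgebra generated by $g_1, \ldots, g_{f_1-1}$ the vector $|\lambda; m\rangle_q$ lies in the isotypic component indexed by $\mathrm{shape}(m^{(f_1)})$, while $|\lambda_1, \lambda_2; m_1, m_2\rangle_{q,\eta}$ lies in the component indexed by $\lambda_1$; orthogonality of distinct isotypic components forces the overlap to vanish. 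This disposes of the residual case and completes the proof, with no computation beyond the two lemmas required.
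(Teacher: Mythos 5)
Your proof follows the paper's own argument: the paper's proof of Theorem~\ref{selec} is precisely the observation that under $m^{(f_1)}\neq m_1$ the pair $(m^{(f_1)},m_1)$ is either a crossing or a bridge pair, followed by an appeal to Lemma~\ref{incroci} and Lemma~\ref{ponti1}. Your extra isotypic-component argument for the case $\mathrm{shape}(m^{(f_1)})\neq\lambda_1$ is a correct way to handle a point the paper silently glosses over (Definitions~\ref{cros} and~\ref{brid} are stated only for equal-shape pairs), though it is not strictly needed, since the crossing/bridge dichotomy and both lemmas rely only on axial distances and the $g_i$ action and so apply verbatim to pairs of different shapes.
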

\begin{proof}
If $m^{(f_1)} \neq m_1$, then $(m^{(f_1)},m_1)$ is a crossing or bridge pair of standard Young tableaux. Thus, the proof follows from lemma~\ref{incroci} and lemma~\ref{ponti1}. 
\end{proof}
\subsection{An equivariance relation}
We now show another simple proposition that goes further on reducing the number of unknown SDCs in the subduction problem of type A Iwahori-Hecke algebras. With the same notation of the previous subsection, it is possible to give the following 
\begin{theorem}[Identity Rule]
If $m^{(f_1)}=m$, then 
\begin{equation}
\langle\lambda; m|\lambda_1, \lambda_2; m_1, m_2 \rangle_{q,\eta} = \langle \lambda; g_i(m)| \lambda_1,\lambda_2; g_i(m_1), m_2 \rangle_{q,\eta}
\nonumber
\end{equation}
for all $\leq i\leq f_1 -1$.
\label{equiv}
\end{theorem}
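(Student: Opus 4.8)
The plan is to read the hypothesis as $m^{(f_1)}=m_1$ (the only case in which the statement is not vacuous, since by Theorem~\ref{selec} both coefficients already vanish as soon as $m^{(f_1)}\neq m_1$) and to establish the asserted equality separately for each fixed $i\in\{1,\dots,f_1-1\}$. The first step is to record what the hypothesis buys for such an $i$: because $g_i$ only permutes the entries $i,i+1\le f_1$, the equality $m^{(f_1)}=m_1$ forces $d_i(m)=d_i(m_1)=:d$, and hence $\beta_{i,q}(m)=\beta_{i,q}(m_1)$ while the coefficient $\alpha_{i,q}(m,m_1)$ occurring in the subduction equations vanishes. This bookkeeping is what makes the argument collapse to two ingredients.

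If $|d|=1$ then $g_i(m)=m$ and $g_i(m_1)=m_1$, so the two sides of the claimed identity are literally the same coefficient and nothing is to be proved. The substantial case is $|d|\neq 1$, where $g_i(m)\neq m$ and $g_i(m_1)\neq m_1$. Here I would first invoke the Selection Rule: since $m^{(f_1)}=m_1\neq g_i(m_1)$, the coefficient $\langle\lambda;m|\lambda_1,\lambda_2;g_i(m_1),m_2\rangle_{q,\eta}$ (whose $\lambda_1$-tableau in the ket is $g_i(m_1)$) vanishes for all $\eta$ by Theorem~\ref{selec}; equivalently $(m_1,g_i(m_1))$ is a crossing pair and one applies Lemma~\ref{incroci}.

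The second ingredient is a single Hermiticity relation. Using $g_i=g_i^{\dagger}$ and the explicit action~(\ref{actstd}), I would evaluate the matrix element $\langle\lambda;m|\,g_i\,|\lambda_1,\lambda_2;g_i(m_1),m_2\rangle_{q,\eta}$ in two ways: letting $g_i$ act on the bra $\langle\lambda;m|$ and letting it act on the ket $|\lambda_1,\lambda_2;g_i(m_1),m_2\rangle_{q,\eta}$ (keeping track of $d_i(g_i(m_1))=-d$, $[-d]_q=-[d]_q$ and $g_i(g_i(m_1))=m_1$). Equating the two expansions and using $d_i(m)=d_i(m_1)=d$ yields a relation of the shape
\begin{equation}
\beta_{i,q}(m)\Big(\langle\lambda;m|\lambda_1,\lambda_2;m_1,m_2\rangle_{q,\eta}-\langle\lambda;g_i(m)|\lambda_1,\lambda_2;g_i(m_1),m_2\rangle_{q,\eta}\Big)=\frac{q^{d}+q^{-d}}{[d]_q}\,\langle\lambda;m|\lambda_1,\lambda_2;g_i(m_1),m_2\rangle_{q,\eta}.
\nonumber
\end{equation}
The right-hand side is $0$ by the previous step, while $\beta_{i,q}(m)\neq 0$ because $|d|\neq 1$; dividing gives the desired equality for this $i$. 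Running the argument over all $i\in\{1,\dots,f_1-1\}$ completes the proof.

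I expect the only genuine obstacle to be organizational rather than computational: one must recognize that the mixed coefficient $\langle\lambda;m|\lambda_1,\lambda_2;g_i(m_1),m_2\rangle_{q,\eta}$ is exactly the object annihilated by the Selection Rule, so that the two-dimensional kernel analysis of Lemma~\ref{incroci} degenerates here, in the presence of $\alpha_{i,q}(m,m_1)=0$, to the single clean relation displayed above. The two-way expansion itself is routine, provided one is careful with the signs produced by $[-d]_q=-[d]_q$ and with the vanishing of $\alpha_{i,q}(m,m_1)$ under the hypothesis; those are the points where a slip is most likely.
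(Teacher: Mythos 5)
Your proof is correct, and it reaches the same two-case split as the paper ($|d_i|=1$ trivial, $|d_i|\neq 1$ substantive), but the substantive case is organized along a genuinely leaner route. The paper re-enters the machinery of Lemma~\ref{incroci}: it specializes the $4\times 4$ matrix $\mathcal{S}$ to the situation $d_i(m)=d_i(m_1)$, exhibits two explicit kernel vectors to obtain the relations (\ref{cond2tris}) and (\ref{cond4}), and then kills the term $\langle\lambda;g_i(m)|\lambda_1,\lambda_2;m_1,m_2\rangle_{q,\eta}$ by appealing to the intermediate identity (\ref{cond2bis}) from the crossing-pair analysis. You instead write down a single two-way expansion of $\langle\lambda;m|\,g_i\,|\lambda_1,\lambda_2;g_i(m_1),m_2\rangle_{q,\eta}$ (your displayed relation checks out, including the signs from $[-d]_q=-[d]_q$ and the use of $\beta_{i,q}(m)=\beta_{i,q}(m_1)$) and then annihilate the mixed coefficient $\langle\lambda;m|\lambda_1,\lambda_2;g_i(m_1),m_2\rangle_{q,\eta}$ by the already-proved Selection Rule, since $m^{(f_1)}=m_1\neq g_i(m_1)$; dividing by $\beta_{i,q}(m)\neq 0$ finishes. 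There is no circularity, as Theorem~\ref{selec} precedes Theorem~\ref{equiv}. What your version buys is economy and logical transparency: the Identity Rule is exposed as a one-line consequence of Hermiticity plus the Selection Rule, with no need to recompute the kernel of $\mathcal{S}$. What the paper's version buys is that it stays entirely inside the explicit kernel description of $\mathcal{S}$, which also yields the extra relation (\ref{cond2tris}) as a by-product. You are also right to read the hypothesis as $m^{(f_1)}=m_1$ (a typo in the statement); the paper's own proof tacitly does the same.
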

\begin{proof}
Because $m^{(f_1)}=m$, we have $d_i(m)=d_i(m_1)=d_i$ and $\beta_{i,q}(m)=\beta_{i,q}(m_1)=\beta_{i,q}$ for all $1\leq i \leq f_1-1$. Again, we distinguish two cases: 
\begin{itemize}
\item $|d_i|\neq 1$.\\  It is straightforward that the matrix $\mathcal{S}$, written in the previous subsection, has the kernel space which is generated by the vectors
\begin{equation}
\left(
\begin{array}{c}
1 \\
0 \\
0 \\
1
\end{array}
\right), \ \ \ 
\left(
\begin{array}{c}
\frac{[2 d_i]_q}{{[d_i]_q}^2\beta_{i,q}} \\
1 \\
1 \\
0
\end{array}
\right)
\nonumber
\end{equation}
therefore we have
\begin{equation}
\langle\lambda;  g_i(m)|\lambda_1,\lambda_2;m_1,m_2 \rangle_{q,\eta} = \langle\lambda; m| \lambda_1,\lambda_2;g_i(m_1), m_2 \rangle_{q,\eta}
\label{cond2tris}
\end{equation}
and
\begin{multline}
\langle \lambda;m|\lambda_1,\lambda_2; m_1,m_2 \rangle_{q,\eta} = \langle \lambda;g_i(m)|\lambda_1,\lambda_2; g_i(m_1), m_2 \rangle_{q,\eta} + \\
+ \frac{[2 d_i]_q}{{[d_i]_q}^2 \beta_{i,q}} \langle\lambda;  g_i(m)| \lambda_1, \lambda_2; m_1, m_2\rangle_{q,\eta}.
\label{cond4}
\end{multline}
Because $d_i(g_i(m))= - d_i(m)=-d_i(m_1)$, (\ref{cond2bis}) becomes  
\begin{equation}
\langle \lambda; g_i(m)| \lambda_1, \lambda_2; m_1, m_2 \rangle_{q,\eta} = 0
\nonumber
\end{equation}
and, from (\ref{cond4}),
\begin{equation}
\langle\lambda; m|\lambda_1, \lambda_2; m_1, m_2 \rangle_{q,\eta} = \langle \lambda; g_i(m)| \lambda_1,\lambda_2; g_i(m_1), m_2 \rangle_{q,\eta}.
\nonumber
\end{equation}
\item $d_i = \pm 1$.\\ This is a trivial case because $g_i(m)=m$ and $g_i(m_{1})=m_{1}$. Thus, of course, 
\begin{equation}
\langle \lambda;m| \lambda_1, \lambda_2; m_1,m_2 \rangle_{q,\eta} = \langle \lambda; g_i(m)| \lambda_1,\lambda_2; g_i(m_1), m_2 \rangle_{q,\eta}.
\nonumber
\end{equation}
\end{itemize}
\end{proof}
\begin{remark}
From theorems~\ref{selec} and~\ref{equiv}, the general structure of the SDCs follows: 
$$\langle \lambda;m| \lambda_1, \lambda_2; m_1,m_2 \rangle_{q,\eta} = c_{q,\eta}(\lambda, \lambda_1, \lambda_2; m, m_2) \ \delta_{m_1,m^{(f_1)}}.$$
In other words, each SDC is not dependent on the standard Young tableau $m_1$ and it vanishes if $m^{(f_1)}\neq m_1$.
\end{remark}
\section{Discussion}
From the previous theorems, we get that only a smaller number of the $f^{\lambda} f^{\lambda_1} f^{\lambda_2}$ unknowns for the primal subduction problem really needs to be evaluated. In fact, if $f^{\lambda \backslash \lambda_1}$ represents the number of \emph{skew-tableaux} with diagram $\lambda \backslash \lambda_1$ (which is equal to the number of tableaux $m$ with Young diagram $\lambda$ such that $m^{(f_1)}=m_1$, where $m_1$ is a standard Young tableau with diagram $\lambda_1$), the number of not vanishing SDCs is given by  $f^{\lambda \backslash \lambda_1} f^{\lambda_1} f^{\lambda_2}$. 
\par
In table~\ref{tabmult} we deal with some subduction cases, the relative multiplicity, the number of unknowns involved in the primal linear equation system and the effective number of needed SDCs, after the application of the selection and identity rules. It is evident the drastic reduction of the number of unknowns for the subduction problem.   
\begin{table}
\begin{center}
\begin{tabular}{cccc}
\toprule
$[\lambda]\downarrow [\lambda_1]\otimes [\lambda_2]$ & $\{\lambda; \lambda_1, \lambda_2 \}$ & $f^{\lambda} f^{\lambda_1} f^{\lambda_2}$& $f^{\lambda/\lambda_1} f^{\lambda_2}$ \\
\midrule
$[4,2]\downarrow [2,1]\otimes [2,1]$ & $1$ & $36$ & $6$ \\
$[3,2,1]\downarrow [2,1]\otimes [2,1]$ & $2$ & $64$  & $12$  \\
$[4,2,1]\downarrow [3,1]\otimes [2,1] $ & $2$ & $210$ & $12$ \\
$[4,3,2]\downarrow [3,2]\otimes [3,1]$ & $2$ & $2520$ & $36$ \\
$[4,3,2,1]\downarrow [3,2,1]\otimes [3,1]$ & $3$ & $36864$ & $72$ \\
$[5,4,3,2]\downarrow [4,3,2]\otimes [3,2]$ & $3$ & $40360320$ & $300$ \\
$[5,4,3,2,1]\downarrow [4,3,2,1]\otimes [4,1]$ & $4$ & $899678208$ & $480$ \\
\bottomrule
\end{tabular}
\end{center}
\caption{Some examples of subduction with the relative multiplicity, the primal number of involved SDCs and the final number of not vanishing SDCs which really need to be evaluated.}
\label{tabmult}
\end{table}
\par
Following~\cite{Chilla}, we obtain that a \emph{reduced} version of the subduction graph is sufficient to provide the required transformation coefficients. Such a graph is obtained by the action of the $g_i$ transformations, with $i \in \{f_1 +1, \ldots, f-1\}$, on the tableaux corresponding to the not vanishing SDCs. On the other hand, the $g_i$ transformations with $i \in \{1, \ldots, f_1 - 1 \}$, by the identity rule, can be not considered.
\par
As an example of computation thanks to the results of this paper, we could study the first multiplicity-three case for the subduction problem in type A Iwahori-Hecke algebras which accours in $[4,3,2,1] \downarrow [3,2,1] \otimes [3,1]$ of $ \mathbb{C}\mathbf{H}(\mathfrak{S}_{10}, q^2) \downarrow  \mathbb{C}\mathbf{H}(\mathfrak{S}_{6}, q^2) \times \mathbb{C}\mathbf{H}(\mathfrak{S}_{4}, q^2) $ and which was never solved before. This higher multiplicity example of subduction is particularly significant to understand how choose the multiplicity separation in a general way.
\par 
Because the representation $[4,3,2,1]$ has dimension $f^{\lambda}=768$, $[3,2,1]$ has dimension $f^{\lambda_1}=16$ and $[3,1]$ dimension $f^{\lambda_2}=3$, we have $f^{\lambda} f^{\lambda_1} f^{\lambda_2}=36864$ SDCs to evaluate. Many of such coefficients vanish via the selection rule provided in the previous section. On the other hand, from the identity rule, we only need to determine the SDCs corresponding to the skew-tableaux of shape $[4,3,2,1]/[3,2,1]$, i.e. we have $f^{\lambda/\lambda_1} f^{\lambda_2} =72$ unknowns. We may use the Yamanouchi convention~\cite{Chenbook} to fix the phase freedom by choosing the \emph{first non-zero} SDC to be \emph{positive}. From subduction graph~\cite{Chilla} and by, for example, a suitable Mathematica program, we may then generate the homogeneous linear sistem required to obtain the SDCs and evaluate the kernel of the subduction matrix. The solution space has dimension $3$ (multiplicity) and the SDCs are orthonormalized in such a way that the conditions (\ref{orton1}) and (\ref{orton2}) hold.
\par 
We finally observe that by coniugation of the previous case, i.e. by considering $[4,3,2,1] \downarrow [3,2,1] \otimes [2,1,1]$, we have another multiplicity three case for the subduction problem. The new SDCs are related to the previous ones by the phase factors of the Yamanouchi basis~\cite{Butler} for the irrep $[\lambda]$, $[\lambda_1]$ and $[\lambda_2]$.


\end{document}